\let\ORIincludegraphics\includegraphics
\renewcommand{\includegraphics}[2][]{\ORIincludegraphics[scale=0.6,#1]{#2}}
\DeclarePairedDelimiter{\ceil}{\lceil}{\rceil}
\newcommand{\defeq}{\vcentcolon=}
\newtheorem{definition}{Definition}
\newtheorem{remark}{Remark}
\newtheorem{lemma}{Lemma}
\begin{document}

\title{Distributed Online Learning for Coexistence in Cognitive Radar Networks}

\author{William W. Howard, Anthony F. Martone, R. Michael Buehrer
\thanks{W.W. Howard and R.M. Buehrer are with Wireless@VT, Bradley Department of ECE, Virginia Tech, Blacksburg, VA, 24061. (e-mails:$\{$wwhoward, buehrer$\}$@vt.edu)  \\
A.F. Martone is with the U.S. Army Research Laboratory, Adelphi, MD 20783. (e-mail:{anthony.f.martone.civ}@army.mil).\\
The support of the U.S. Army Research Office (ARO) is gratefully acknowledged. \\
Portions of this work were presented at IEEE Radar Conf, Atlanta, GA, May 2021 \cite{howard2021multiplayer}. Other portions will be presented at IEEE Radar Conf, New York, NY, March 2022 \cite{howard2022adversarial}.}}





\maketitle
\pagenumbering{roman}
\begin{abstract}



This work addresses the coexistence problem for radar networks. 
Specifically, we model a network of cooperative, independent, and non-communicating radar nodes which must share resources within the network as well as with non-cooperative nearby emitters. 
We approach this problem using online Machine Learning (ML) techniques. 
Online learning approaches are specifically preferred due to the fact that each radar node has no prior knowledge of the environment nor of the positions of the other radar nodes, and due to the sequential nature of the problem.  
For this task we specifically select the multi-player multi-armed bandit (MMAB) model, which poses the problem as a sequential game, where each radar node in a network makes independent selections of center frequency and waveform with the same goal of improving tracking performance for the network as a whole.
For accurate tracking, each radar node communicates observations to a fusion center on set intervals. 
The fusion center has knowledge of the radar node placement, but cannot communicate to the individual nodes fast enough for waveform control. 
Every radar node in the network must learn the behavior of the environment, which includes rewards, interferer behavior, and target behavior. 
Each independent and identical node must choose one of many waveforms to transmit in each Pulse Repetition Interval (PRI) while avoiding \emph{collisions} with other nodes and interference from the environment. 
The goal for the network as a whole is to minimize target tracking error, which relies on obtaining high SINR in each time step. 
Our contributions include a mathematical description of the MMAB framework adapted to the radar network scenario. 
We conclude with a simulation study of several different network configurations. 
Experimental results show that iterative, online learning using MMAB outperforms the more traditional sense-and-avoid (SAA) and fixed-allocation approaches. 



\end{abstract}

\begin{IEEEkeywords}
radar networks, multi-arm-bandit, cognitive radar, reinforcement learning
\end{IEEEkeywords}
\IEEEpeerreviewmaketitle
\vspace{-.37cm}
\section{Introduction}
The demand for spectrum reaches new peaks daily in this era of fifth-generation wireless technologies. 
This increased demand for spectrum has an impact on all types of radio access devices. 
Radar systems in particular are being affected and must often share their spectrum with other services. 
To address this impact, our work is concerned with Dynamic Spectrum Access (DSA) schemes for Cognitive Radar Network (CRN) systems.
DSA is an umbrella term for techniques which seek to assure spectrum access to dissimilar, and often non-cooperative, systems which attempt to access the same resource while simultaneously attempting to mitigate harmful interference between wireless devices \cite{4205091}. 
Specifically with the increased use of spectrum by commercial communication systems in the GHz bands, there is a need for radar systems to be DSA-capable as secondary users (SU). 
However, while the benefits of using a network of radars are numerous (multiple observations of a target, resiliency to physical damage, etc.), they only exacerbate the coexistence problem. 
To address this need, this work focuses on a Multi-player Multi-Armed Bandit (MMAB) approach to DSA in CRNs.

Cognitive systems are an attractive platform for the DSA problem for several reasons. 
Primarily, we leverage the ability of cognitive radar systems to \emph{monitor their environment} to inform future decisions. 
Of the two modes of CRN cognition outlined by Haykin in his seminal work on the topic \cite{1574168}, this approach falls under ``distributed cognition'', where each node in the network is a cognitive agent. 
This contrasts with the alternative ``centralized cognition'' where a \emph{central coordinator} is the only intelligent agent. 
We do, however, describe a ``fusion center'' which collects observations from the network for the purpose of target tracking on the time scale of the Coherent Pulse Interval (CPI)\footnote{For comparison's sake, one CPI contains 1024 PRIs, which is the time scale of individual decision making.}.   
Note that our MMAB technique does not facilitate learning in a joint or federated sense \cite{9060868}; each radar node is aware of the presence of the network, but makes its own decisions. 
In this network, each node is aware of its own position, and the fusion center is aware of all node locations. 
Nodes are not aware of the placement of other nodes. 
This follows the assumption that communication is one-way, from node to fusion center.

We also leverage the inherent flexibility of CRN systems, assuming a network of identical, frequency agile cognitive radar nodes which are able to choose non-overlapping frequency bands and waveforms from a finite library. 
This agility allows each independent cognitive radar node in the network to avoid other emitters in the environment. 
In addition, the described CRN uses sensing and agility to avoid causing harmful interference within the network. 
We discuss the use of a library of orthogonal waveforms, which has been shown to result in both lower ambiguity\footnote{This holds even though the bandwidth of the network is divided among the nodes instead of being allocated to a single node. } \cite{681325} and reduced mutual interference \cite{Majumder2014}.

Our approach applies iterative, \emph{online} Reinforcement Learning (RL), enabling a network of independent identical cognitive radar nodes (or just ``nodes'') to collectively optimize their frequency band and waveform selections over time. 
Each node in the network has the same goal: to optimize target tracking and detection for the entire network. 
This means that actions which cause low Signal to Interference plus Noise Ratios (\texttt{SINR}) for any node in the network are penalized. 
One of the ways this is accomplished is by detecting instances of ``collisions'', or mutual interference, which occur when more than one node selects the same waveform and center frequency in the same PRI.

This work described a method for using two simultaneous techniques to enable a cognitive radar node to select both a center frequency (i.e. frequency band) and a waveform in each time step. 
First to select a center frequency, a MMAB algorithm is implemented which observes rewards from the environment, and collisions between radar nodes. 
The goal of this first algorithm is to obtain an \emph{optimal frequency allocation}, which is defined later, but is an assignment which maximizes tracking performance. 
Second, in order to select waveforms in each time step, an independent single-player bandit model is instantiated in each frequency band, for each radar node. 
This enables the learner to focus on interactions within the band. 
The single-player bandit algorithm is only concerned with rewards from the environment, since over time the frequency allocation will become free of mutual interference.

Multi-armed bandit algorithms were introduced in the 1950s \cite{bams/1183517370} to study the sequential decision-making problem. 
Specifically, the authors considered the problem of maximizing expected rewards when drawing from two or more differently distributed random sequences. 
The problem of \emph{multiple players} interacting in this environment was not addressed in the literature until 2010 \cite{5535151}. 
The multi-player problem was initially motivated by, among other things, coexistence in cognitive radio applications. 
We first adapted MMAB algorithms to the cognitive \emph{radar} application in \cite{howard2021multiplayer}. 

\subsection{Problem Summary}
This work considers a problem of sequential decision making. 
In each of many time steps, how should a group of distributed radar nodes select a waveform and center frequency?  
The use of \emph{multi-player multi-armed bandit} models is discussed to address this problem. 
We will also investigate and mitigate the effect of non-cooperative transmitters (primary users) which can cause unacceptable levels of interference. 
Such a network must use some algorithm to select frequencies and waveforms. 
Any pre-allocation strategy places assumptions on the interference, which is not known. 
In addition, pre-allocating actions to avoid collisions will not necessarily maximize utility, as we will describe.

Given this, our goal is to describe the two intertwined algorithms for band and waveform selection which a distributed radar network could use to self-organize. 
The presence of a fusion center is still assumed, which is capable of collecting observations from each node in each CPI for the purpose of target tracking. 
However, this fusion center provides no support in the network's efforts to determine which actions each node should take, due to the communication and coordination issues discussed above. 
Measurements are sent to a fusion center and combined on the millisecond timescale, while decisions are made on microsecond intervals. 

This work considers two types of interference. The first, which is referred to as \emph{mutual interference}, is when one radar node interferes with another. 
As is shown later, this can have a detrimental effect on target tracking and detection due to unacceptable Signal to Interference plus Noise Ratio (\texttt{SINR}) levels. 
The second, which can be termed \emph{outside interference}, is when another system causes interference to one or more radar nodes. 
Both of these are to be avoided.

We model a network of pulsed radars, where once per pulse repetition interval (PRI) each radar in the network will select a waveform and transmit it on a chosen carrier frequency. 
As we will describe later, the nodes must share a fixed total bandwidth of $B$, which is not dependent on the number of nodes.  
Each radar node must estimate when it encounters interference from other nodes, using a method we will describe. 
Each radar is capable of steering the main beam, but due to sidelobe levels and environmental scatter, some energy from each transmission is received at all nodes in the network.
If any two nodes select interfering actions, they will experience degraded performance which we will demonstrate.

\subsection{Contributions} 
This paper extends earlier work \cite{howard2021multiplayer}, \cite{howard2022adversarial}. 
We previously investigated the available MMAB algorithms, and in \cite{howard2021multiplayer} studied reward structures which differ between radar nodes. 
In \cite{howard2022adversarial}, we analyzed algorithms tailored towards \emph{adversarial} environments, where the environment has knowledge of the algorithm being used by the radar network, and can pre-select a reward sequence in an attempt to worsen the performance of the network. 
Rather than focus on these different environment classes, this current work is interested in the following. 
\begin{itemize}
    \item Developing the system model for the MMAB problem in CRNs. We provide the first such model for MMAB algorithms in this context. 
    \item Providing a novel two-level algorithm capable of converging towards optimal center frequency and waveform selection for radar operation. Our proposed online learning technique is capable of converging to an optimal solution under sub-linear cumulative regret \cite{besson2018multi}, which corresponds to radar tracking estimation improving over time. 
    We provide discussion on this self-organizing CRN can avoid both mutual and non-cooperative interference. 
    \item Analyzing the supporting mathematics for a MMAB algorithm used for waveform selection in cognitive radar networks. We discuss the reward scenarios and decision making structure necessary to apply the MMAB techniques to the CRN problem. 
    \item Demonstrating our proposed technique against alternatives such as SAA and fixed allocation in simulation. We then provide conclusions based on our simulations. 
\end{itemize}

Combined with the two earlier publications, this paper represents the first work on the topic of MMAB algorithms for coexistence in cognitive radar networks. 

\subsection{Notation} We use the following notation. 
Matrices and vectors are denoted as bold upper $\mathbf{X}$ or lower $\mathbf{x}$ case letters.
Functions are shown as plain letters $F$ or $f$. 
Sets $\mathcal{A}$ are shown as script letters. 
The cardinality $|\mathcal{A}|$ of a set $\mathcal{A}$ refers to the number of elements in that set. 
The logical negation of a statement $a$ is given by an overline $\overline{a}$. 
The transpose operation is $\mathbf{X}^T$. 
The backslash $\mathcal{A}\backslash \mathcal{B}$ represents the set difference as $\mathcal{A}\backslash \mathcal{B} = \{a\in\mathcal{A} : a \not\in \mathcal{B}\}$. 
Indicator functions of a variable $x$ on a set $\mathcal{A}$ are denoted as $\mathbbm{1}_\mathcal{A}(x)$. 
The set of all real numbers is $\mathbb{R}$ and the set of integers is $\mathbb{Z}$. 
The speed of electromagnetic radiation in a vacuum is given as $c$. 
The imaginary number is $i$. 
We use $\mathcal{U}\{\mathcal{A}\}$ to denote the uniform distribution over a space $\mathcal{A}$. 


\subsection{Organization} The rest of this paper is organized as follows. 
In Section II, we review previous work on radar networks and cognitive systems.  
Section III develops our proposed machine learning techniques to assign center frequencies and waveforms to each radar in a network. 
We first detail a method for frequency band selection, followed by a method for waveform selection within that frequency band. 
Section IV provides simulation results and discussion, and in Section V we draw conclusions and suggest future work.  

\section{Background}

\subsection{Cognitive Radar Networks}
Cognitive radar, first described by Haykin \cite{haykin2006}, is described as having four main elements: the Perception-Action Cycle (PAC); memory; attention; and intelligence \cite{6218166}, \cite{7910111}. 
The perception-action cycle describes the iterative interactions typical to this sort of system. 
Further, the IEEE has defined cognitive radar as systems which display intelligence and can modify both operating and processing parameters in response to a changing environment \cite{8048479}. 
This could also refer to spectrum sensing, where a radar would observe spectral behavior, and adapt based on what it sees.

Cognitive radar systems have been proposed for non-cooperative coexistence many times \cite{8000669, 9532494, 8961364}. 
Specifically of interest in this work are cognitive radar \emph{networks}, which can combine observations from individual radars in an intelligent manner to benefit the network \cite{1574168, 6586147}.

Cognition in radar networks can be accomplished in two main ``modes'': 
\begin{itemize}
    \item \textbf{Distributed cognition}, where each node in a network possesses cognitive ability, and
    \item \textbf{Centralized cognition} where a central coordinator makes network-wide decisions. 
\end{itemize}
This poses a fundamental trade-off: centralized cognition may appear to offer several benefits, most significantly a reduced problem space\footnote{In a centralized scenario, a coordinator can limit the choice of actions to the space which contains no collisions, which we will define later. However in a decentralized scenario, the problem space expands to include conflicting actions. } due to the combined observations, but it has significant drawbacks. 
Any centralized system becomes vulnerable to the loss of the central node. 
In addition, reporting observations on a realistic time scale may require communications at the PRI timescale while sharing target information which requires, at most, communications at the coherent pulse interval (CPI) timescale (typically two orders of magnitude slower).   
Since spectrum access for traditional monostatic radar is already limited, and an increased number of nodes will worsen the problem, cognitive radar networks will benefit from any reduction in wireless communication overhead.

In the context of radar, cognitive systems have been described using the PAC \cite{9532494, 9114775, 9527128}, a closed-loop model which implements cognition as defined above. 
It has been stressed that a major property of this problem is the \emph{interaction} between the cognitive system and its environment \cite{8961364}. 
To that end, we can visualize the cycle as shown in Fig. \ref{fig:block_diagram}, where each node uses a cognitive process (multi-player bandit frequency selection paired with a ``single-player'' waveform selection algorithm in this work) to choose a waveform and center frequency from a library to transmit in each time step.

We use the following notation to describe the network structure: 
\begin{itemize}
    \item \textbf{Centralized network}: A CRN using a \emph{central coordinator} to assign frequencies and waveforms, as well as to fuse network observations. The central coordinator has bidirectional communication with all of the nodes. 
    \item \textbf{Decentralized network}: A CRN with \emph{no feedback} with a \emph{fusion center} to combine network observations. Information only flows from the nodes to the fusion center; there is no feedback. 
\end{itemize}
\begin{figure*}
    \centering
    \includegraphics[scale=1.3]{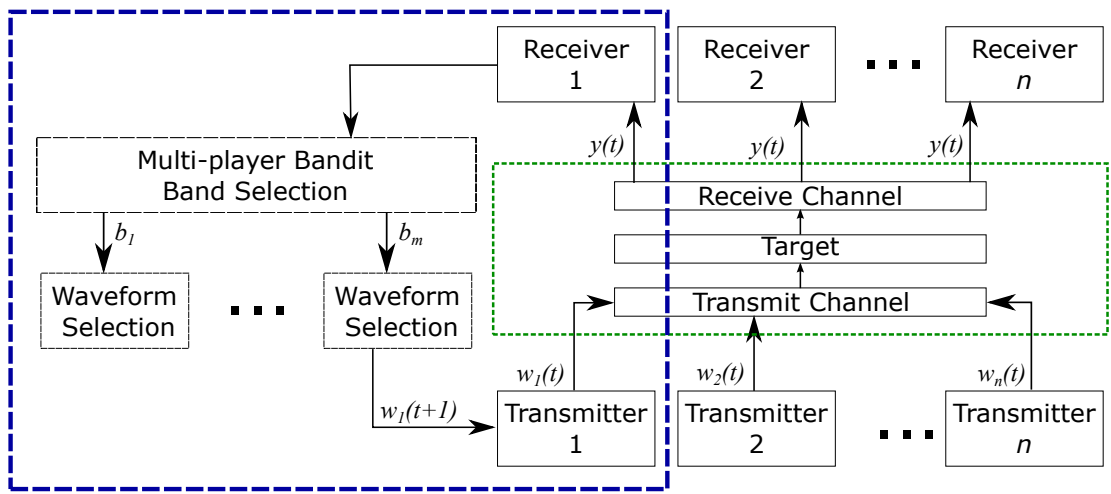}
    \caption{Transmit/receive cycle for the cognitive radar network. The decision process for the first transmitter/receiver pair has been shown, but is implemented at each node. Importantly, each node $i$ independently selects a waveform ${w}_i(t)$, which is modulated by the environment, then returned as a waveform $y(t)$. Using the received energy, the cognitive learner selects the next transmit waveform ${w}_i(t+1)$. }
    \label{fig:block_diagram}
\end{figure*}

\subsection{Related Work}
Radar networks have several methods to track and detect targets. 
Generally, radar networks fall into two classes \cite{MIMObook}: MIMO (multi-input, multi-output) radar networks and radar sensor networks (RSNs). 
We focus primarily on the latter, since they are well-suited to distributed and decentralized techniques. 
RSNs utilize multiple radars that are independent of one another and operate in a mono-static mode. 
Each radar makes independent observations of the environment and forms target models.
The central node, if present, can provide feedback over a longer time period and combine sensing observations from all the radars.

A fundamental work on RSN's is \cite{rsndesign}, where the authors propose a basic framework for RSN target detection. 
In their work, it is assumed that each radar conducts observations, then losslessly communicates the received waveforms to a clusterhead for signal processing. 
They show that using Continuous Wave waveforms which do not overlap in frequency, coupled with the spatial diversity inherent to RSN's, provides a significant improvement compared to the baseline single monostatic sensor. 
Since the waveforms do not overlap in frequency we can consider them to be orthogonal under the definition provided later.

An early work on CRNs was published by Haykin shortly after his seminal work on cognitive radar \cite{haykin2006}. 
In this work, Haykin describes potential future applications for CRNs including the use of centralized cognition for data fusion and control. 
Another implementation, with more emphasis on distributed cognition, is described in \cite{mimogame}. 
A MIMO CRN is split into clusters, with no assumption of communication between clusters. 
A noncooperative game is used to formulate a power allocation algorithm between clusters to minimize mutual interference. 

In this work we assume a decentralized CRN with a \emph{fusion center} that only performs the data fusion function, i.e. it provides no feedback to the nodes. 
To fuse target information, each node reports target observations on some schedule, i.e. once per PRI or CPI. 
To avoid solving the implied communications problem, we assume that this unidirectional communication happens once per CPI, occurs on some pre-allocated communication channel separate from radar frequencies, and is error free.

Machine learning techniques are well suited to this problem due to the variable nature of Radio Frequency (RF) environments. 
Instead, an approach based on pre-allocation of waveforms and frequencies could be selected. 
This, however, makes strong assumptions on the interference environment, as well as the quality of the target measurement each radar node can make. 
In addition, traditional methods will place more stringent assumptions on the coordination of the network, making the distributed cognition scheme less powerful. 
A cognitive radar \emph{network} is selected for this problem specifically for this reason: while a single node may be overwhelmed by an interferer or lack proximity or high Radar Cross Section (RCS), the inherent spatial diversity of a CRN offers flexibility in many domains, ensuring better tracking accuracy and higher probability of detection \cite{spatialdiversity}.

In previous RL-based DSA work, the Markov assumption is often applied \cite{radar_emitter_hiddenmarkov}, \cite{9069470}. 
Markov environments have the property that the next state depends \emph{only} on the current state. 
The goal in this sort of problem, then, is to learn the probability distribution of next states conditioned on the current state. 
However, interferer behavior in realistic channels often has high temporal correlation, leading to the breakdown of the Markov assumption. 
This motivates study into online sequential learning, which seeks to exploit any patterning or correlation in previous observations to inform future decision-making.

Mutual interference mitigation is a common goal for machine learning systems in radar sensor networks. 
A common target application is automotive radar in connected vehicles, where multiple independent radars attempt to share a dynamic environment with many radars entering and leaving a region. 
This obviously causes problems for any technique relying on pre-allocated frequencies. 
It is important to note that this application has results that are very easily generalized to radar sensor networks. 
The work of \cite{8828037} provides a good overview of the problem, current mitigation techniques, and possible future directions. 
One technique discussed is beamforming to steer nulls in the directions of other nodes in the network. 
This can be accomplished through estimating the location of the other nodes, then selecting waveforms to null specific directions. 
This technique limits the target tracking and detection capability of any network, since energy is necessarily not directed along the Delauny lines (i.e., those lines connecting two nodes) of a network. 
Any target in these regions would have a lower probability of detection due to the relatively lower power emitted in that direction.

Another common method discussed in \cite{8828037} is to borrow Code Division Multiple Access (CDMA) waveform techniques from wireless communications. CDMA waveforms have the benefit of being orthogonal in code, which allows for separability on receive even with overlaps in other dimensions. 
CDMA commonly has the downside of requiring a broad bandwidth, but this issue is somewhat mitigated in the radar application since wideband channels are usually available. 
However, a somewhat larger issue is the requirement for synchronization in CDMA schemes which is difficult to guarantee. 

In addition, the huge power disparities caused by variations in target and intra-node distances tend to cause swamped target returns, causing large problems in filtering. 
Proposed future work includes joint radar/communication systems, decentralized multiple access schemes, and use of future modulation techniques to mitigate interference.

In \cite{8388252} the authors provide analytical bounds for network wide Signal to Interference Ratio (SIR) levels due to different waveform selections. 
The author of \cite{4106078} shows probabilities of mutual interference, considering spatial, temporal, and frequency domains. 
Importantly, they conclude that regardless of temporal delay or spatial separation, mutual interference due to frequency overlap tends to dominate. 
Thus, in this work, we seek to avoid such interference by choosing orthogonal frequency bands.

To consider the case of non-orthogonal waveforms, \cite{intsup} presents a cross-matched filtering technique to mitigate mutual interference between radars. 
Non-orthogonal waveforms used across a RSN can cause high-power backscatter \& line of sight interference which can severely impact performance. 
The cross-matched filter is motivated by the fact that perfectly orthogonal transmitted signals may not remain orthogonal upon receive, so there is a need to sequentially match filters to each transmitted waveform in the network to suppress interference, which leaves only the transmitted waveform. 
This is predicated on the assumption that perfect orthogonality is needed for coherent processing.

In \cite{Majumder2014}, the authors show that near-orthogonality is acceptable, and introduce the idea of $\varepsilon$-orthogonality to provide a bound on the similarity between waveforms. 
$\varepsilon$-orthogonality is defined based on the cross-ambiguity between two waveforms, which is related to cross-correlation. 
However, the authors do not consider the case when large power disparities exist, such as between radar reflections and line of sight interference.

Cross-ambiguity is a time ($\tau$) and frequency ($\nu$) analysis tool used to solve signal processing problems. 
Specifically, we will look at cross-ambiguity as a way to reduce mutual interference in RSN's. 
A thorough mathematical description of the cross-ambiguity function is provided in \cite{Vandenberg2012}. 

Mutual interference detection and mitigation in radar systems has been addressed from several directions. 
\cite{8461806} describes several coding schemes for automotive frequency-modulated continuous wave radars which eliminate the mutual interference problem without requiring detection. 
For our scenario, which uses pulsed radar waveforms, this solution is incompatible. 
Similarly for automotive radar, \cite{7226239} describes an energy detection method for mutual interference detection, enabling a higher-level process to determine a mitigation scheme. 
Once detected, the authors propose the use of spatial filters to mitigate the effects of mutual interference. 
Unfortunately, one of the limitations of this algorithm is that if the target happened to be in the same direction as an interfering node, this method would result in failed detections and poor tracking.

To summarize, one clearly important aspect of resource allocation in radar sensor networks is the consideration of multiple access schemes at each node. 
By selecting resources correctly at each node, mutual interference can be greatly avoided. 
This problem draws a lot from the similar problem of interference mitigation in wireless communication networks, but with an important distinction: wireless communication aims to maximize the information transmitted directly to some receiver, while radar networks seek to maximize target information using co-located transmitters and receivers. 
In this work we consider multiple access primarily through frequency and code division, due to the limitations of a radar network. 
We will use a library of nearly orthogonal waveforms and orthogonal frequency bands to detail a method for decentralized waveform selection at each node, requiring minimal coordination. 
\section{Optimal Waveform Assignment}
Here we will describe the space of possible outcomes, when each cognitive radar node in a network must select both a \emph{center frequency}\footnote{We use both terms \emph{center frequency} and \emph{frequency band} interchangeably to refer to the bandwidth allocated to a channel. } and a \emph{waveform}\footnote{The term \emph{waveform} refers to the specific waveform selected from a library to be transmitted in a particular frequency band. }. 
Specifically, the network seeks to select the actions resulting in the highest \emph{utility}, when there is no communication between the independent nodes.

\subsection{Center Frequency Selection}
We can begin by defining an optimal configuration based on a pre-defined set of frequency bands.
We assume that the center frequencies are selected such that two radars on two different center frequencies will have non-overlapping bandwidths. 
In each of many time steps $t$ less than finite time horizon $T$, each radar $r_i$ selects a center frequency $f_j$. 
Let the set of radars be denoted as $\mathcal{P} \defeq \{r_1, r_2, \dots, r_i, \dots, r_M\}$ with $|\mathcal{P}| = M$. 
Similarly, let the library of center frequencies be $\mathcal{F} \defeq \{f_1, f_2, \dots, f_j, \dots, f_N\}$ with $|\mathcal{F}| = N$.
Let $M$ and $N$ be related as  $M,N \in \mathbb{Z}$ s.t. $M < N$. Note that $\mathcal{P}$ and $\mathcal{F}$ do not have a time dependence.

Then, $\mathcal{P}$ and $\mathcal{F}$ form the two disjoint sets of nodes of a fully connected bipartite graph.  
The matrix of choices\footnote{By ``matrix of choices'' we mean the space of possible radar-frequency pairings. } becomes the set of edges $\mathcal{E} = \mathcal{P} \times \mathcal{F}$. 
We specify a fully connected graph to imply that any radar $r_i \in \mathcal{P}$ can select (and observe a reward for selecting) any center frequency $f_j \in \mathcal{F}$. 
Further we can call the graph $(\mathcal{P}, \mathcal{F}, \mathcal{E})$.
Formalizing the problem in this manner allows for the examination of the space of possible outcomes, when one node selects one frequency band. 
We can analyze the utility of each possible configuration, to get a sense of the objective.

An edge is any connection between the vertices $r_i, f_j$ of the bipartite graph. 
In this application, edges are equivalent to a radar $r_i$ selecting center frequency $f_j$. 
Edges can be denoted as the concatenation of the vertices they connect - $r_if_j$. 
We can assign each edge $r_if_j$ a weight $u(r_i,f_j) \defeq \mu_{i,j}$ with 
\begin{equation}
    \label{eq:reward_map}
    u(r_i,f_j):\mathcal{P}\times \mathcal{F} \to [0,1] 
\end{equation}
which is the reward that radar $r_i$ observes for selecting center frequency $f_j$. 

A mapping $\widehat{\pi}(t)$ at time $t$ is any collection of edges where no vertex in $\mathcal{P}$ is repeated. 
Mappings may or may not contain repetitions of vertices in $\mathcal{F}$. 
In other words, $\widehat{\pi}(t)$ can be any function which maps $\mathcal{P}$ to $\mathcal{F}$. 
If there exist $a,b\in\mathbb{Z}$ such that $r_a$ and $r_b$ are in the same mapping ($r_af_j\in \widehat{\pi}(t)$ and $r_bf_j \in \widehat{\pi}(t)$), then we say that radars $r_a$ and $r_b$ have collided. 
For any time step $t$ let $\mathcal{C}_{\widehat{\pi}(t)}$ be the set of colliding radars for a mapping $\widehat{\pi}(t)$.
Formally, 
\begin{definition}[Collision]Let $\mathcal{C}_{\widehat{\pi}(t)}$ be the set of colliding radars: 
$\mathcal{C}_{\widehat{\pi}(t)} \defeq \{r_i \;|\; \exists \; r_i, r_k, \;$\emph{with}$\; r_if_j, r_kf_j \in \widehat{\pi}(t)\}$. \\
A radar $r_i$ is said to \emph{collide} at a time step $t$ if \emph{$r_i \in \mathcal{C}_{\widehat{\pi}(t)}$}.
\end{definition}

A matching $\pi(t):\mathcal{P} \to \mathcal{F}$ is any mapping with no common vertices\footnote{In other words, matching functions are injective. }. 
So, in a matching, there are no collisions. 
We can collect all of the possible matchings $\{\pi\}$ of a graph $\left(\mathcal{P}, \mathcal{F}, \mathcal{E}\right)$ into a set $\mathcal{M}$. 
Note that if $\pi(t)$ is a matching, it is also a mapping, while if $\widehat{\pi}(t)$ is a mapping, it is not necessarily a matching. 
\begin{remark}
If $\pi(t)$ is a matching, the set of colliding radars $\mathcal{C}_{\pi(t)}$ is empty. 
\end{remark}
Note that we'll sometimes drop the time dependence of a mapping when we discuss the space of possible mappings. We can determine which matchings are best by looking at the sum of their combined rewards. 
We call this value the \emph{utility} of the matching and define it as follows. 
\begin{definition}[Utility]
The \emph{utility} of a mapping $\pi$ is the sum of the rewards each radar observes using that mapping: 
\begin{equation}
    \label{eq:matching_utility}
    U(\pi) = \sum_{r_i f_j \in \pi} \mu_{i,j}
\end{equation}
\end{definition}
Now, adding time dependence, we can find the maximum utility in a step $t$ as 
\begin{equation}
    \label{eq:max_utility}
    U^*(t) = \max_{\pi \in \mathcal{M}} U(\pi(t))
\end{equation}

This leads naturally to a definition of an optimal configuration. 
This represents the highest-reward center frequency selection for each radar in a network for a given time step $t$. 
\begin{definition}[Optimal Configuration]
A set of radar nodes $\mathcal{P}$ with a common center frequency library $\mathcal{F}$ is said to be in an \emph{optimal configuration} at time step $t$ if $\pi(t):\mathcal{P} \to \mathcal{F}, \pi(t) \in \mathcal{M}$ has utility $U^*(t)$. 
\end{definition}

\begin{remark}
Existence of an optimal configuration is guaranteed by definition, but uniqueness of the solution is not. 
\end{remark}


\subsection{Orthogonal Waveforms}
So far, we have defined optimality based on selection from a arbitrary waveform library. 
We will make this more concrete by discussing the use of \emph{orthogonal waveforms} to develop a library, and justify the resulting \texttt{SINR} differences. 
Specifically we will refer to the \texttt{SINR} post matched filtering, so that orthogonal interference does not impact the observations of each radar node. 

The use of orthogonal waveforms in radar networks has been shown to aid in target tracking by obtaining more information through the differences in matched filter ambiguity responses \cite{681325}. 
Each node might have a different view of a target, local interference, or local scattering. 
Each of these could lead to differing local preferences for waveforms or frequency bands. 
This problem is somewhat simpler in the centralized setting, when a coordinator can specify a matching of orthogonal waveforms in each time step for the network. 
One limitation of the centralized case is that the central controller may not have sufficient information regarding local conditions. 
However, in this work we examine the decentralized setting. 
Since we have no communication within the network on the PRI time scale, there will be no means for the network to assign a different orthogonal waveform to each node for each PRI.

Two waveforms are defined as orthogonal when their time cross-correlation is zero. 
In radar applications, due to target motion and noise, waveforms that are orthogonal at transmission are not necessarily orthogonal at the receiver. 
Therefore it has been proposed to define \textit{near}- or $\varepsilon$-orthogonality.
Two waveforms are nearly orthogonal when their cross-ambiguity is less than $\varepsilon$ \cite{Majumder2014}. Formally, 
\begin{equation}
    \label{eq:nearly_orthogonal}
    \max_{\tau, f_s}\frac{|\chi_{w_1,w_2}(\tau,f_s)|^2}{E_{w_1}E_{w_2}} \leq \varepsilon
\end{equation}
where $\chi_{w_1,w_2}$ is the cross ambiguity
\begin{equation}
    \label{eq:cross_ambiguity}
    \chi_{w_1,w_2}(\tau,f_0) = \int_{-\infty}^\infty w_1(t)w_2^*(t-\tau)e^{i2\pi f_0 t}dt
\end{equation}
and $E_{w_i}$ is the energy of waveform $w_i$. 
This can be calculated for either the transmitted or received energy of waveform $w_i$.

Intuitively, very different signals will have low cross-ambiguity. 
This holds for both temporally shifted (TDMA) and frequency shifted (FDMA) signals, as well as signals that are coded to be orthogonal (CDMA). 
However, we will not consider TDMA for the following reasons.

Consider a network using TDMA waveforms with equal PRFs. 
Then, every PRI, there would be a chance of receiving mutual interference at one or more time delays. 
Conceptually, the duration of a PRI could be increased to $M \times$PRI to allow TDMA where there are $M$ radar nodes. 
This would be very inefficient and also result in limited maximum detectable range. 
If any of these align with target returns, then the target will not be resolvable.

Consider next a network using TDMA waveforms with coprime PRFs. 
Then, each radar would be unable to shift its PRF with high fidelity to optimize target tracking. 
This approach also assumes some sense of PRF coordination network-wide. 
Since PRF should be linked to target radial velocity, this places a large constraint on target tracking ability.

Due to these reasons we will not consider TDMA for waveform selection. 
Instead, we will assume a library of Orthogonal Frequency Division Multiplexing (OFDM) waveforms in each frequency band.
We specify two main design constraints on this library: 
\begin{enumerate}
    \item Mutual collisions must remain detectable (under criteria described later). 
    \item The library must contain a waveform which is orthogonal to a single narrow-band interferer in one of $2^s$ sub-bands\footnote{The term sub-band refers to subdivisions of the selected frequency band. }. 
\end{enumerate}
We make the first constraint due to MMAB algorithms requiring collision information: if more than one node selects the same frequency band, they must detect it. 
Secondly, we model each frequency band as containing at most one narrow-band primary user such as a communications system. 
This information is not known a priori, so part of the learning problem is understanding where the primary user is since a waveform cannot be orthogonal to interference that is unknown in advance. 
We will detail a method for this in a following section.

Assume a set $\mathcal{H}$ of $s$ sub-bands, enumerated $h_1, h_2, \dots, h_s$ with $s$ even. 
Further, let each sub-band $h_i$ be contiguous with $h_{i+1}$.
Let waveforms $w_j$, $j\leq s + 1$ occupy sub-bands $H_j \subseteq \mathcal{H}$, where 
\begin{equation}
    H_j = \begin{cases}
    \text{maxB}\left(\mathcal{H}\backslash h_j\right), & j < s+1\\
    \mathcal{H}, & j = s+1
    \end{cases}
\end{equation}
The backslash $\mathcal{X}\backslash x$ denotes set subtraction. 
Let $\text{maxB}\left(x\right)$ be the function which returns the largest-bandwidth contiguous set of sub-bands.

As an example, when $s=4$, $\mathcal{H} = \{h_1, h_2, h_3, h_4\}$. Waveform $w_1$ will use sub-bands $H_1 = \{h_2,h_3,h_4\}$ and $w_2$ will use sub-bands $H_2 = \{h_3,h_4\}$. 
This method provides a set of waveforms which can be orthogonal to an interferer in any given sub-band, without requiring notched waveforms.

Note that if two radar nodes were to use nearly orthogonal waveforms, \emph{there may still be unacceptable levels of interference} due to the low power echo received from the target, relative to the direct signal from the second radar (even with low sidelobes). 
In addition, any non-synchronous reception will destroy the orthogonality. 



%
%
%

So far we have had no discussion of collision detection. 
How are radar nodes able to sense when a collision has occurred? 
We will examine this in the following section.

\subsection{Collision Detection}
We would like for each node $i$, which picks waveform $w_j$ in PRI $p$ to form an estimate of $\mathbbm{1}_{\mathcal{C}_{\mathcal{W}_p}}(w_j)$, where 
\begin{equation}
    \label{eq:ind_col}
    \mathbbm{1}_{\mathcal{C}_{\mathcal{W}_p}}(w_j) = \begin{cases}
    1, & w_i \in \mathcal{C}_{\mathcal{W}_p}\\
    0, & \text{else}
    \end{cases}
\end{equation}
with $C_{\mathcal{W}_p}$ being the set of colliding waveforms. 
So, let $I$ be an estimate as
\begin{equation}
    \label{eq:est_col}
    I = \widehat{\mathbbm{1}}_{\mathcal{C}_{\mathcal{W}_p}}(w_j). 
\end{equation}
%


Recall the radar equation: 
\begin{equation}
    \label{eq:radar}
    P_r = \frac{P_t G_t G_r \lambda^2 \sigma}{(4\pi)^3 R_t^2 R_r^2}
\end{equation}
where $P_t$ is the transmission power, $G_t$ is the transmitting array gain, $G_r$ is the receiving array gain, $\lambda$ is the wavelength corresponding to a center frequency of $f_c$, $\sigma$ is the radar cross section (RCS) of the target, $R_t$ is the distance from emitter to target, and $R_r$ is the distance from target to receiver. 
Note that in a mono-static scenario such as the one we consider, $G_t=G_r$ and $R_t=R_r$ for a given radar node. 
We can compare the radar equation, which describes the power received from a target, to the power received from an interfering radar (or other) transmission at a range of $R_n$: 
\begin{equation}
    \label{eq:int_power}
    P_I = \frac{P_tG_t(\theta)G_r \lambda^2}{4\pi R_n^2}
\end{equation}
where $G_t(\theta)$ represents the interfering radar transmit gain in the direction of the radar of interest. 
We can compare the received power from both types of source, assuming the same distance and gain characteristics, and an RCS of $3m^2$, which is characteristic of a medium aircraft \cite{knott2006radar}. We assume for this example that the radar node operates with an output power of 30dBw. 
Specifically, we will compare three levels of received power. First, the power received from a main-beam target reflection. Second, from the sidelobe (at a relative -10dB from the main beam) of a second radar node, and third from an outside interferer with an output power of 10dBw. 
In Fig. \ref{fig:rx_power}, we can see that a radar node will receive much higher power from another node than it would from returns of its transmitted waveform or from interferers. 
\begin{figure}[t!]
    \centering
    \includegraphics{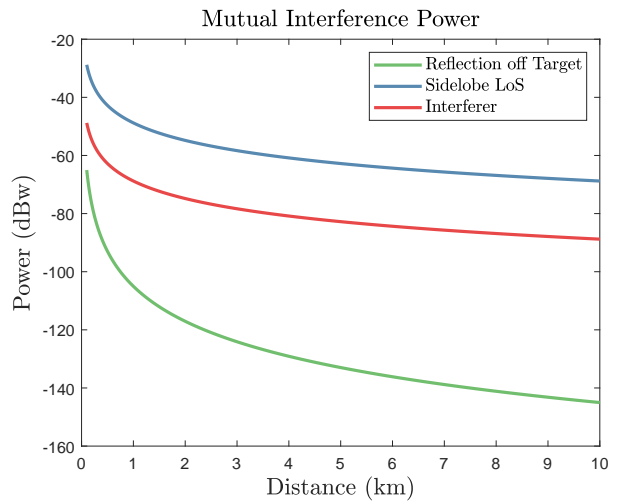}
    \caption{Received power over a range of distances for a target, sidelobe LoS received from another node in a network, and LoS interference. }
    \label{fig:rx_power}
\end{figure}

The lowest-power mutual interference is still more powerful than most radar returns. 
If we consider main-beam interference instead, the effect would be even more evident. 
So, we propose the use of a received power threshold to determine when collisions occur, given some small limitations.  
The remainder of this paper considers distances between $1$ and $3$km. 

If in a given PRI, an individual radar receives a return composed of a target reflection along the main beam AND high-power mutual interference at an angle $\theta$, we can represent the received power as: 
%
\begin{equation}
    \label{eq:combo_power}
    P_{r,I} = \frac{P_t G^2 \lambda^2 \sigma}{(4\pi)^3 R_t^4} + \frac{P_t GG(\theta) \lambda^2}{(4\pi)^2 R_n^2}
\end{equation}
if the mutual interference is totally orthogonal to the radar return. 
Then, assuming that $\frac{\sigma}{4\pi R_t^4} < \frac{1}{R_n^2}$, we can see that
\begin{equation}
    \frac{P_t \lambda^2 GG(\theta)}{(4\pi)^2 R_t^2} \leq P_{r,I} \leq \frac{2 P_t \lambda^2 GG(\theta)}{(4\pi)^2 R_t^2}
\end{equation}
Note that this holds even when the mutual interference and radar return are not completely orthogonal due to the inequalities. Using this as an energy detector, we are interested in PRIs when the received energy exceeds 
\begin{equation}
    \int_{t}^{t + PRI}\frac{P_t G^2\ \lambda^2}{(4\pi)^2 R_n^2} dt
\end{equation}
where a PRI lasts from $t$ to $t+PRI$.

Then, two issues remain before we can define a collision detector. 
First, how do we determine the value of $R_n$, the distance to the nearest node? 
In the scenario where each radar node is aware of the network positions, this is trivial. 
However, in the general case, when the network is simply distributed as a point process with some density, we can determine the average distance to the nearest node.

Specifically let the radar network be spatially distributed according to a Binomial Point Process (BPP) with intensity $\gamma$. 
BPPs are spatial distributions with a fixed number of points $N(t) = n$. 
Poisson Point Processes (PPPs), on the other hand, are spatial distributions with $N(t)$ drawn from a Poisson distribution. 
A BPP is equivalent to a PPP conditioned on $N(t)$. 
Then, the nearest neighbor distribution function is given as $g(r)$ \cite{haenggi_2012}. 
Let the inter-node distance $R_n = \mathbb{E}[g(r)]$ be the expectation of the nearest neighbor distribution. 

Secondly, we need to support the assumption that $\frac{\sigma}{4\pi R_t^4} < \frac{1}{R_n^2}$. 
In Fig. \ref{fig:distance_ratio} we can see that for a range of RCS values and average node spacings, the minimum target distance needed for this assumption to hold is consistently low, and is sublinear with average node spacing. 

\begin{figure}
    \centering
    \includegraphics{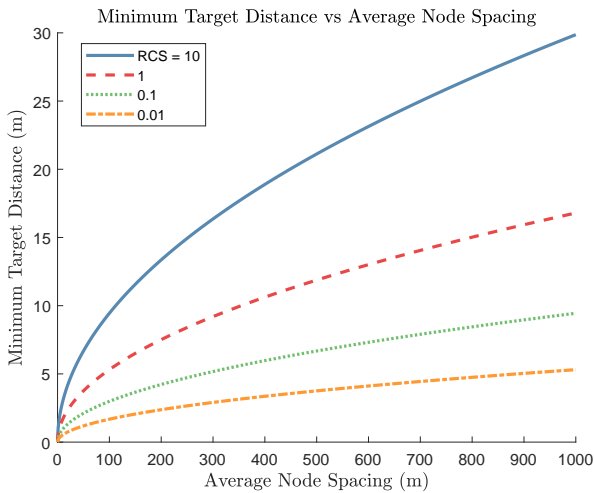}
    \caption{A comparison of the distances needed for the collision detection assumption to hold, given various RCS values. }
    \label{fig:distance_ratio}
\end{figure}

We can write the final energy detector as 
\begin{equation}
    \label{eq:col_estimator}
    I(t) = \begin{cases}
    1, & E_I(t) \geq \int_{t}^{t^*}\frac{P_t G^2\ \lambda^2}{(4\pi)^2 \mathbb{E}[g(r)]} dt\\
    0, & \textit{else}
    \end{cases}
\end{equation}

Note that we discuss collisions only in terms of \emph{mutual interference}. 
We consider instances of outside interference to primarily impact the observed reward, and do not consider these as collisions. 
Since outside interferers are assumed to transmit at substantially lower power than each radar node, this assumption is justified.

\subsection{Band and Waveform Selection}
We can extend the decision-making to two discrete levels by parameterizing the waveforms by an index of $\mathcal{W}$, the library of OFDM waveforms, and an index of $\mathcal{F}$ which we define to be the set of available center frequencies. 
Then, the action comes from the space $\mathcal{F}\times \mathcal{W}$. 
Specifically we will develop the use of a multi-player bandit algorithm for center frequency selection, then delegate the choice of a waveform to a single player bandit, with independent instances for each choice of center frequency.


There are two primary motivations for this dual structure. 
First, the outer multi-player bandit algorithm will be able to learn the average behavior of each sub-band, while simultaneously learning to select center frequencies from an optimal matching. 
Second, the inner single-player bandit algorithms have the ability to tailor the waveform to avoid intra-band interference, causing performance to improve over time.


We will assume a fixed maximum bandwidth for all radars, with a variable center frequency. 
Each radar chooses an action from the space $\mathcal{F}\times \mathcal{W}$, with $\mathcal{F}$ chosen so that there is no overlap in frequency when two different center frequencies are chosen.  
In other words, it selects a waveform from the library $\mathcal{W}$ and a center frequency from $\mathcal{F}$ to transmit that waveform.

\subsection{Algorithms}
Varieties of the proposed sequential decision-making problem has been studied in cognitive radio, as well as in cognitive radar \cite{howard2022adversarial}, \cite{thornton2020efficient}.
The approach in this work employs a multi-player multi-arm bandit framework to allow nodes to develop a model of their environment, without relying on a central decision-maker \cite{howard2021multiplayer}. 
This means that each node can only access the information it has observed: rewards for each waveform selected, and an indicator for collisions. 
Since we are assuming a model with no dedicated communication between nodes, and only limited exchanges of information from each node to the fusion center, each node needs to execute independent algorithms that are capable of converging to the optimal utility for the entire network. 
There is abundant work in the literature on the topic of Multi-Arm Bandit (MAB) models for agents which make sequential decisions over a finite time horizon T \cite{MAL-068, bandits, EXP3}.
In addition, this model is a two-layer algorithm as described above: a multi-player algorithm to determine center frequency, and a single-player algorithm to select waveforms. 

\subsubsection{Multi-Player Algorithms}
MABs consider the sequential interaction between a player and an environment, which consists of multiple ``arms''. 
In each of finitely many time steps, the player selects one arm and observes a reward generated by the environment for that arm. 
Over time the player must balance actions which \emph{explore} versus those that \emph{exploit}. 
Exploring actions are those taken to generate a better understanding of the environment. 
Exploitative actions are those that take advantage of prior knowledge of the environment to maximize rewards \cite{MAL-068}. 
Rewards are assumed by each player to be drawn i.i.d. from unknown distributions. 
Over time the player seeks to learn the mean of the distribution for each arm, which we denote as $\widehat{\mu}_f^r$ for player $r$'s estimate of the mean of arm $f$. 
So, a naive strategy might have the player attempt each arm once then select the one with the highest reward for the rest of the game. 
This only works if 1) the arm variance is very low, so the arm ordering can be correctly estimated from only one sample, and 2) the arm means do not vary over time. 
In realistic scenarios neither of these are likely to be true. 
For instance, in this scenario, rewards are influenced by interferer behavior.

Multi-player MAB (or MMAB) models consider the interactions between many players which must use the same set of arms. 
As in the framework described above, this allows two players to select the same action and collide, since
if multiple players see the same ``best'' arm, they will all desire to use it. 
So, it is clear then that MMAB models are well-suited to the problem of distributed radar: Multiple radars must select sequential actions while seeking to minimize target tracking error across a network. 

We choose a structure that allows a MMAB algorithm to select the center frequency $f \in \mathcal{F}$, and a ``single-player'' algorithm to select the waveform $w \in \mathcal{W}$ so that the resulting action is in the space $\mathcal{F}\times\mathcal{W}$. 
To be clear, the goal for each node is twofold: 1) end up in a frequency band with no other nodes and which maximizes \emph{network} \texttt{SINR}, and 2) select waveforms in that frequency band which maximize \texttt{SINR}. 
We will discuss the use of Sense and Avoid for both center frequency and waveform selection. 
This will serve as a baseline for comparison, followed by a discussion of MMAB algorithms such as Musical Chairs for center frequency selection with single-player MAB algorithms such as $\epsilon$-greedy for waveform selection. 
In this structure, the goal for the network is for each radar node to learn to select a center frequency which is different from each other node, as well as waveforms which attain high \texttt{SINR} in the selected band. 
Collisions are observed via the technique described above.

While each radar node only implements a single instance of the center frequency selection algorithm, it will implement a waveform selection algorithm for \emph{each center frequency}. 
This structures the algorithm such that interference near one center frequency does not influence waveforms selected near another center frequency. 

In the following section when we detail a MMAB algorithm we will denote the actions as  $f\in\mathcal{F}$, while if we discuss a single-player algorithm for waveform selection we will use notation $w\in\mathcal{W}_f$. 
If we drop the frequency band dependency on the waveform library it is understood that we're referring to the general case of a waveform selection algorithm instantiated in some frequency band. 
For the particular case of Sense and Avoid in the following section, we denote the actions as waveforms $w\in\mathcal{W}$ but also discuss its use for frequency band selection. 

\begin{remark}
We consider only online-learning approaches, since the described network has no prior knowledge of the environment. Due to short coherence times, variable targets, and possibly changing rewards, each radar must learn the behavior of the environment, which includes rewards, interferer behavior, and target behavior. 
\end{remark}

\paragraph{Sense and Avoid} The simplest algorithm we will consider, called 
``sense and avoid'' 
(Algorithm \ref{algo:SAA}), has each radar select a random action and observe for collisions. 
When collisions happen, they select a new action. Otherwise they keep repeating the previous action. 
Assuming no outside interference, this will result in a reward matching, but it will not necessarily be optimal, since the algorithm does not consider which actions may be better for the radar. 
This allows SAA to use $I(t)$ to determine when to switch actions. 
Note that in this instance we can consider a lower threshold for collisions which includes the event where a pulse overlaps with a primary user. 

\vspace{0.1in}
\begin{algorithm}
\SetAlgoLined
\KwResult{w(t)}
 \eIf{$\overline{I(t)}$}{
  $w(t) = w(t-1)$\;
  }{
  $w(t) = \mathcal{U}\{\mathcal{W}\backslash w(t-1)\}$\; 
 }
 \caption{Sense And Avoid}
 \label{algo:SAA}
\end{algorithm}
\vspace{0.1in}

Recall that $\mathcal{U}\{\cdot\}$ represents uniform sampling over a set, and the backslash represents the set difference. 
In simulation, we will use SAA as a center frequency selection algorithm \emph{as well as} a waveform selection algorithm. 

\paragraph{Musical Chairs}
The ``Musical Chairs'' (MC) algorithm \cite{pmlr-v48-rosenski16}, shown in Algorithm \ref{algo:MC}, is a step up in complexity from SAA. MC develops an estimate of $\mathcal{W}^*$, the set of best center frequencies, by specifying a well-defined exploration period where every player attempts to observe the reward for each action as many times as possible while avoiding collisions. Since there is no coordination the regret incurred during this exploration period is rather high, as collisions will be unavoidable. 
The exploration continues for $T_0$ time steps, where 
\begin{equation*}
    T_0 = \ceil[\big]{\max{\left(\frac{16M}{\epsilon^2}\ln\left(\frac{4M^2}{\delta}\right), \frac{M^2 \log(\frac4\delta)}{0.02}\right)}}
\end{equation*}
is chosen to guarantee a regret bound.  
Here, $M$ represents the number of players, $\epsilon$ is a bound on the correctness of the estimate of $\mathcal{W}^*$, and $\delta$ represents the distance from the $M^{th}$ best reward to the $(M+1)^{th}$ best reward. 

\vspace{0.1in}
\begin{algorithm}
    \SetAlgoLined
    \KwResult{$w(t)$} 
    Input $y_i(t), c_i(t), fixed=\text{False}, t=0$\\
    \vspace{3mm}
        \uIf{$\overline{fixed}$}{
            \uIf{$t \leq T_0$}{
            $w(t) =\mathcal{U}\{\mathcal{F}\}$\;}
            \ElseIf{$t>T_0$}{
                \uIf{I(t-1)}{
                    $w(t) = \mathcal{U}\{\mathcal{W}^*\}$\;}
                \Else{
                $w(t) = w(t-1)$\;
                fixed = true\;}}}
        \Else{
        $w(t) = w(t-1)$\;}
    \caption{Musical Chairs}
    \label{algo:MC}
\end{algorithm}
\vspace{0.1in}

\paragraph{Musical Chairs Top M} The ``Musical Chairs Top M'' (or MCTopM) (Algorithm \ref{algo:MCTopM}) \cite{besson2018multi} algorithm was designed to allow all players time to explore through the use of the Upper Confidence Bound \cite{UCB_fischer} which we represent as $g(t)$. 

\begin{equation}
    g(t) = \widehat{\mu}_f^r(t) + \sqrt{\frac{\log(t)}{2T_w^r(t)}}
\end{equation}

In each round, the player will attempt to pick one of the best $M$ center frequencies (contained in the set $\mathcal{W}^*$), where there are $M$ players. 
If the player successfully selects one of these actions, it is marked as a ``chair'' and the player will select it so long as it remains in $\mathcal{W}^*$. 
If a player collides on action $w\in\mathcal{W}^*$, a new action from $\mathcal{W}^*$ will be selected. 
If an action is marked as a ``chair'' but is no longer in the $\mathcal{W}^*$, the player will use the UCB values to select a new action from $\mathcal{W}^*$. 
One interesting feature of MCTopM is that by virtue of the UCB, the player will tend not to change actions too frequently. 
In a realistic radar context this will be beneficial since there is a cost associated with changing waveforms too often.

\vspace{0.1in}
\begin{algorithm}[t!]
\SetAlgoLined
\KwResult{w(t)}
 \uIf{$\overline{w(t-1) \in \mathcal{W}^*}$}{
  $w(t) = \mathcal{U}\{\mathcal{W}^* \cap \{w_i : g_{w_i(t-1)} \leq g_{w(t-1)}(t-1)\}\}$\;
  }
 \uElseIf{$I(t-1)$ \& $\overline{\text{fixed}}$}{
  $w(t) = \mathcal{U}\{\mathcal{W}^*\}$\; 
 }
 \Else{
  $w(t)$ = $w(t-1)$\;
  fixed = true\;
  }
 \caption{Musical Chairs Top M}
 \label{algo:MCTopM}
\end{algorithm}
\vspace{0.1in}

\subsubsection{Single-Player Algorithms}
Traditional bandit algorithms only consider the behavior of an individual learner, sampling from some set of actions with the goal of minimizing cumulative regret. 
As discussed, we utilize the MMAB algorithm to select a frequency, followed by one single player bandit algorithm \emph{per center frequency} which learns the interferer behavior (and therefore rewards) in that band. 
Many algorithms exist in the literature for this problem \cite{MAL-068}, and we discuss two. 
In addition we can use the above SAA algorithm in the single player bandit role. 

\paragraph{$\epsilon$-Greedy} In addition to SAA, the second algorithm we will consider for waveform selection simply selects a random action with some constant probability $\epsilon$, and selects the highest-average-reward action with probability $1-\epsilon$ \cite{sutton2018reinforcement}. 
This allows the learner to explore different actions with some fixed probability over time, while attempting to ensure low regret by selecting the highest-reward action the rest of the time. 

\paragraph{$\epsilon$-Decaying} As a variation on $\epsilon$-Greedy, $\epsilon$-Decaying \cite{sutton2018reinforcement} selects an $\epsilon$ value as a function of the number of trials. 
As a heuristic we found that setting $\epsilon = \frac{1}{t^{0.8}}$ exhibits the best performance in this setting. 
Fig. \ref{fig:decay_sweep} shows that this value is a local minimum in a range of possible exponents. 
\begin{figure}
    \centering
    \includegraphics{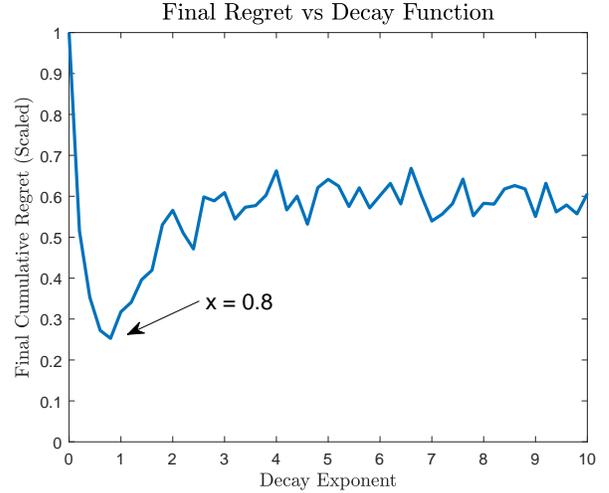}
    \caption{Final regret over a simulation of 100 CPIs with three radar nodes using MCTopM and $\epsilon$-Decaying, using a decay exponent described on the $x$-axis. }
    \label{fig:decay_sweep}
\end{figure}

\subsection{Rewards}
Clearly, the choice of rewards influences which matchings will have optimal utility. 
Rewards are typically drawn from i.i.d. Gaussian or Bernoulli distributions, although \emph{no assumption on the type of distribution is needed}\footnote{This is because each radar only needs to consider the mean of the distribution, which can be determined with confidence given enough data. }. 
As we discussed earlier, different radar waveforms can be used to maximize different target information. 


Previous work in reinforcement learning for radar \cite{9178313} has shown that a reward function based on a weighted combination of \texttt{SINR} and bandwidth improves detection accuracy. Since each radar in this network will have a fixed bandwidth, we'll use a reward function that encourages high \texttt{SINR}, and discourages collisions: 
\begin{equation}
    \label{eq:rewards}
    \mu_{i,j} = \begin{cases}
    \alpha \left(\texttt{SINR}_{i,j}+\beta\right), & I_i(t)=0\\
    0, & I(t) = 1
    \end{cases}
\end{equation}
where $\alpha$ and $\beta$ are parameters used to map the \texttt{SINR} observed by radar $i$ selection action $j$ roughly to the unit interval. 
$I_i(t)$ indicates whether or not radar $i$ experienced a collision. 
Note that while the \texttt{SINR} appears dependent on both radar and action choice, it is actually only dependent on the action choice $j$, with the radar dependence indicating which radar selected that action. 



In simulations, each node samples $\texttt{SINR}_{i,j}$ from a normal distribution with mean $\frac{\mu_{i,j}}{\alpha} - \beta$. 

\subsubsection{Regret}
MABs have convenient theoretical guarantees on parameters such as regret, which is the difference in reward between actions an agent took and the actions that some oracle with perfect knowledge of rewards would take. 
Regret comes in several flavors. Specifically, we will discuss the so-called ``weak'' regret, which compares actions to the best (on average) actions. 
Regret is defined for the single radar as Eq. (\ref{eq:single_regret}), where $\mu^*$ is the reward of the optimal action. 
\begin{equation}
    \label{eq:single_regret}
    R_t(r) = t\mu^* - \sum_{t_0=1}^T \mu(t_0)
\end{equation}
Regret is also defined for a group of players. 
Formally, after any PRI $1 \leq t \leq T$
\begin{equation}
    \label{eq:regret}
    R_t = tU^* - \sum_{t_0=1}^T\sum_{r\in\mathcal{P}} \mu_{r}(t_0)
\end{equation}
is the cumulative regret at the time horizon $T$ for a group of radars $\mathcal{P}$. 
Since the reward function Eq. (\ref{eq:rewards}) penalizes low \texttt{SINR} and is maximized by high \texttt{SINR}, low regret will also correspond to good radar tracking performance. 

Note that regret only has meaning if the number of available actions is greater than one. 

Finally we can write the \emph{average cumulative regret} in a time step $t\leq T$ as
\begin{equation}
    \overline{R}_t = \frac{R_t}{t}
    \label{eq:ave_cum_regret}
\end{equation}
We use Eq. (\ref{eq:ave_cum_regret}) in the results shown below. 

\subsection{Performance Analysis}
Given the structure of this environment, we can make some claims. 
First, we need to define the \emph{network \texttt{SINR}}. 
\begin{definition}[Network \texttt{SINR}]
The network $\texttt{SINR}_\pi$ is the sum of the \texttt{SINR} at each node $i$, written as $\texttt{SINR}_{i,j}$ which select actions according to the policy $\pi$. 
\begin{equation}
    \texttt{SINR}_\pi = \sum_{r_i,f_j \in \pi} \texttt{SINR}_{i,j}
\end{equation}
\end{definition}

\begin{lemma}
Maximal utility implies maximum network \texttt{SINR}. 
\end{lemma}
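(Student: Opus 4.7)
The plan is to show that, restricted to matchings, the utility $U(\pi)$ is a strictly increasing affine function of the network $\texttt{SINR}_\pi$, so that the two maxima are attained by the same policy.

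First I would invoke Remark 1: if $\pi \in \mathcal{M}$ is a matching, then $\mathcal{C}_{\pi} = \emptyset$, so for every radar $r_i$ participating in $\pi$ the collision indicator satisfies $I_i(t) = 0$. This means that the second branch of the reward definition in Eq.~(\ref{eq:rewards}) never activates, and every edge reward reduces to $\mu_{i,j} = \alpha\bigl(\texttt{SINR}_{i,j} + \beta\bigr)$.

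Next I would substitute this into the utility definition Eq.~(\ref{eq:matching_utility}) and split the sum:
\begin{equation*}
U(\pi) \;=\; \sum_{r_i f_j \in \pi} \alpha\bigl(\texttt{SINR}_{i,j} + \beta\bigr) \;=\; \alpha \sum_{r_i f_j \in \pi} \texttt{SINR}_{i,j} \;+\; \alpha\beta M \;=\; \alpha\,\texttt{SINR}_\pi + \alpha\beta M,
\end{equation*}
where I used $|\pi| = |\mathcal{P}| = M$ since $\pi$ is an injective map from $\mathcal{P}$ to $\mathcal{F}$. Because $\alpha > 0$ and $\alpha\beta M$ is a constant independent of $\pi$, the map $\pi \mapsto U(\pi)$ is a strictly monotone affine function of $\pi \mapsto \texttt{SINR}_\pi$ over $\mathcal{M}$.

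Finally I would conclude by taking the argmax: any $\pi^*$ achieving $U^*(t) = \max_{\pi \in \mathcal{M}} U(\pi(t))$ must also achieve $\max_{\pi \in \mathcal{M}} \texttt{SINR}_\pi$, which proves the implication. The only subtle point, and the one that deserves an explicit sentence in the write-up, is why it is legitimate to restrict attention to matchings when comparing utilities; this follows because the optimal configuration is defined (Eq.~(\ref{eq:max_utility})) as a maximum over $\mathcal{M}$, and any mapping with collisions assigns reward $0$ to the colliding radars, so no such mapping can dominate a collision-free matching with positive $\texttt{SINR}$ entries. I do not expect any real obstacle here; the result is essentially a one-line computation once the collision-free structure of matchings is recognized.
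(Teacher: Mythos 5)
Your proof is correct and follows essentially the same route as the paper's: both reduce the reward on a collision-free matching to $\mu_{i,j} = \alpha\left(\texttt{SINR}_{i,j}+\beta\right)$ and then observe that maximizing the resulting sum over $\mathcal{M}$ is equivalent to maximizing $\texttt{SINR}_\pi$ because $\alpha$ and $\beta$ are constants. Your explicit rewriting $U(\pi) = \alpha\,\texttt{SINR}_\pi + \alpha\beta M$, which uses the fact that every matching has exactly $M$ edges so the $\beta$ term is a constant offset independent of $\pi$, makes precise a step the paper leaves implicit, and you correctly flag that $\alpha>0$ is needed for the monotonicity.
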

\begin{proof}
    First, note that $\alpha\left(\texttt{SINR}+\beta\right) \geq 0$. Trivially, if this quantity is always equal to zero, utility is always maximized. So, without loss of generality, assume that $\alpha\left(\texttt{SINR}+\beta\right) > 0$.

    Since $U(\pi) = U^*$ implies $I(t)=0$ for all radars $r_i$ in the set $\mathcal{P}$ and the matching $\pi$, we can simplify the reward equation to 
    \begin{equation*}
        \mu_{i,j} = \alpha\left(\texttt{SINR}_{i,j} + \beta\right). 
    \end{equation*}
    Further, 
    \begin{align*}
        U^* &= \max_{\pi \in \mathcal{M}} U(\pi(t))\\
        &= \max_{\pi \in \mathcal{M}} \sum_{r_i,f_j \in \pi} \mu_{i,j}\\
        &= \max_{\pi \in \mathcal{M}} \sum_{r_i,f_j \in \pi} \alpha\left(\texttt{SINR}_{i,j}+\beta\right). 
    \end{align*}
    Since $\alpha$ and $\beta$ are constants and $\alpha\left(\texttt{SINR}_{i,j}+\beta\right)$, this implies that for $U(\pi)=U^*$, 
    \begin{align*}
        \texttt{SINR}_\pi &= \max_{\pi \in \mathcal{M}} \sum_{r_i,f_j\in\pi} \texttt{SINR}_{i,j}
    \end{align*}
    which is the maximum network \texttt{SINR}. 
\end{proof}

Since high \texttt{SINR} corresponds to low estimation variance, it's clear that our reinforcement learning algorithm will result in the best-case radar tracking performance for this environment, assuming that the algorithm is capable of converging to the optimal matching.

\section{Simulations}
\begin{table*}[t]
    \centering
    \caption{Simulation parameters, unless stated otherwise. }
    \begin{tabular}{||c | c||c | c||}
    \hline 
    Parameter & Value & Parameter & Value\\
    \hline \hline
    Number of Radars     & 3 & Number of Targets & 1\\
    \hline 
    PRIs per CPI    & 400 & Target Initial Position & [400,400] m\\
    \hline 
    Total CPIs   & 50 & Bandwidth & 20MHz\\
    \hline 
    Typical SNR & 12 dB & Averaged Simulations & 50\\
    \hline
    Frequency & 2.4GHz & PRI Duration & $1.024\times10^{-4}$\\
    \hline
    \end{tabular}
    
    \label{tab:params}
\end{table*}

We will investigate the performance of a radar network which tracks a moving object. 
Each radar is stationary and uses a combined center frequency and waveform selection algorithm (which we specify) to estimate target position, Doppler, range, and angle. 
The individual node locations in the network are modeled as a Binomial Point Process, with a fixed number of nodes placed randomly in the disc centered at $[500, 500]$m with a radius of $500$m. 

Each radar implements a two-dimensional Kalman filter, estimating position $\hat{x} = [x,y]$ and velocity $\hat{v} = [v_x, v_y]$. 
At the end of each CPI, each radar conducts processing on the returns it observed, updates its Kalman filter, and passes the updated track to a fusion center. 
At the fusion center, the position estimates are averaged using equal weights. 
Table \ref{tab:params} lists common simulation parameters. 
%



\emph{Interference}. We model interference through the reward function. 
Each frequency band is assumed to have some interference to noise ratio, with a primary user in one sub-band. Waveforms which overlap with this primary user will experience lower \texttt{SINR} than those which do not. 
Since we'd like to use as much of the available bandwidth as possible to maximize target information, we'll add a term to the reward equation which penalizes waveforms, inversely proportional to their bandwidth.  

%
    
%


%

In the environment considered here, single-node full-band radars will always be outperformed by networks of multiple radar nodes, and static allocation networks will be outperformed by those using coexistence strategies. 
To the latter point, we first present the regret performance shown in Fig. \ref{fig:single_regret}. 
This is a comparison of a network using fixed allocations for frequency and waveform, and one using SAA for waveform selection. 
We can see that the network using sense and avoid for waveform selection is able to obtain far less regret than that using a static allocation, however, the regret is linear since SAA is not optimal. This performance corresponds to the radar tracking error shown in Fig. \ref{fig:single_error}, where the network using SAA has far better performance. 

\begin{figure}[t!]
    \centering
    \includegraphics{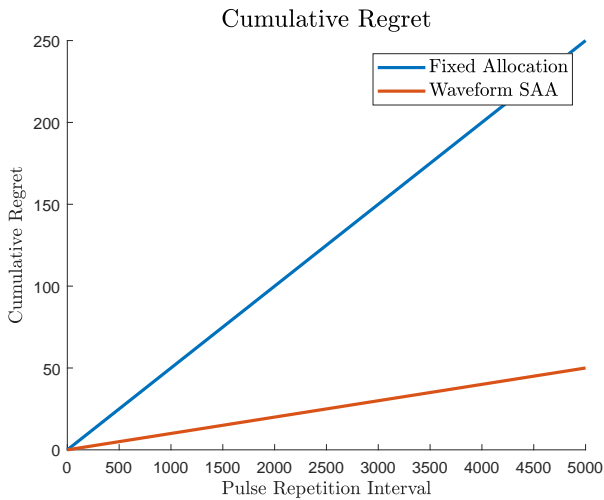}
    \caption{Cumulative regret for two different configurations: the first network of two radars only uses fixed allocations for center frequency and waveform selection, while the second network of two radars uses a fixed allocation for center frequency selection and SAA for waveform selection. }
    \label{fig:single_regret}
\end{figure}
\begin{figure}
    \centering
    \includegraphics{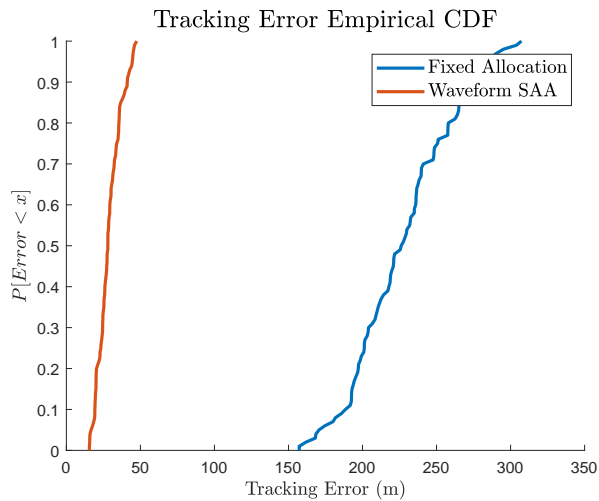}
    \caption{Tracking error for two networks of two radars each. The second network, using SAA for waveform selection, outperforms the network using no intelligent selection. }
    \label{fig:single_error}
\end{figure}

Next, we can look at the benefit of increasing the number of nodes in the network. 
Each network uses the MCTopM algorithm for center frequency selection and $\varepsilon$-Decaying for waveform selection. 
Fig. \ref{fig:netSizeRegret} examines the difference in \emph{average} cumulative regret Eq. (\ref{eq:ave_cum_regret}). 
The networks considered consist of two, three, and four nodes. 
We can see that on average, each node in each of these networks attains the same amount of regret. 
In addition, it's clear that the average cumulative regret is asymptotically going to zero; this means that the center frequency and waveform selections are optimal by roughly the $100^{th}$ PRI. 

Further, in Fig. \ref{fig:netSizeError}, we see that the \emph{tracking error decreases with network size}. 
This is because even though the actions being selected in each network approach the optimal, the averaged tracking estimates over greater network sizes improve. 

\begin{figure}
    \centering
    \includegraphics{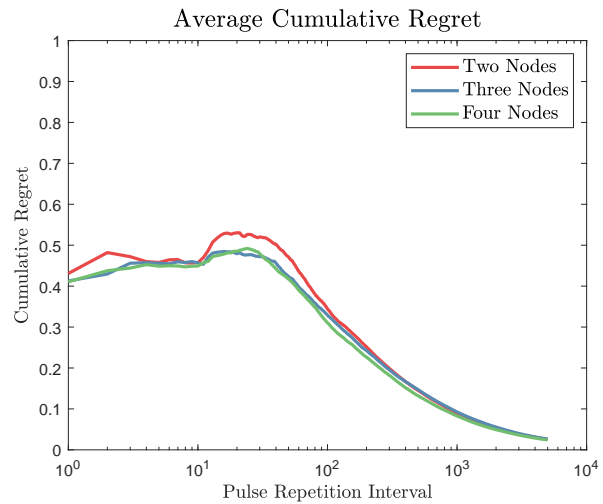}
    \caption{Network regret for networks of two, three, and four radars. Since the average regret \emph{per radar} does not increase with the network size, we can see that there is no impact to the learning problem from additional radar nodes. }
    \label{fig:netSizeRegret}
\end{figure}
\begin{figure}
    \centering
    \includegraphics{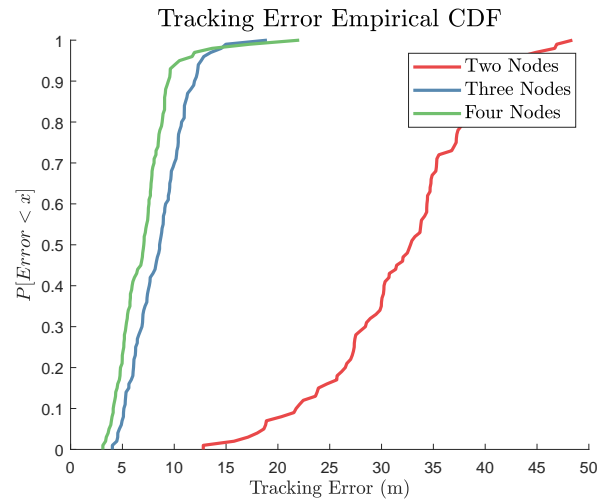}
    \caption{Network tracking performance for network sizes of two, three, and four radars. The improvement from three to four radars is not as pronounced as from two to three, since the observation quality of the fourth radar is less than the others due to the environment configuration. }
    \label{fig:netSizeError}
\end{figure}

Figs. \ref{fig:NetRegret} and \ref{fig:NetError} show the benefits of incorporating different cognitive strategies. 
We can see that using any strategy with MC or MCTopM outperforms the fixed-allocation or waveform SAA strategies shown in Fig. \ref{fig:single_error}. 
This is because even while several radar nodes may be forced into using sub-bands or waveforms with low \texttt{SINR}, the network as a whole can mitigate these errors. 
Note also that the average cumulative regret \ref{fig:netSizeRegret} informs the tracking performance Fig. \ref{fig:netSizeError}. 
Those strategies that do not asymptotically approach zero in average cumulative regret will tend to have higher tracking error than those which approach zero. 
The best-performing algorithm combination can be seen to be MCTopM coupled with $\varepsilon$-Decaying, which is as we would expect. 
The UCB-informed MCTopM is able to balance exploration and exploitation while avoiding mutual interference, and $\varepsilon$-Decaying causes the waveform selection algorithm to converge early in the game.

\begin{figure}
    \centering
    \includegraphics{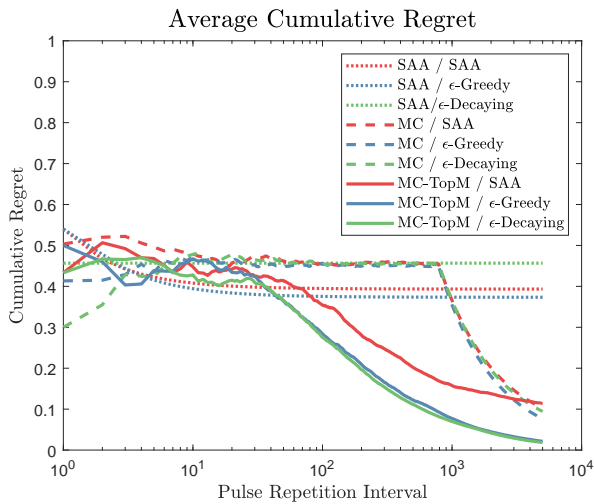}
    \caption{Average cumulative regret for nine different CRNs. Each network has three nodes and employs a two-step algorithm as defined above. Networks employing SAA for frequency selection tend to converge to a sub-optimal solution, while those using MCTopM tend to converge towards the optimal. Similarly, any network using SAA for waveform selection will not obtain optimal performance, and E-Decreasing will have lower regret in each time step. }
    \label{fig:NetRegret}
\end{figure}
\begin{figure}
    \centering
    \includegraphics{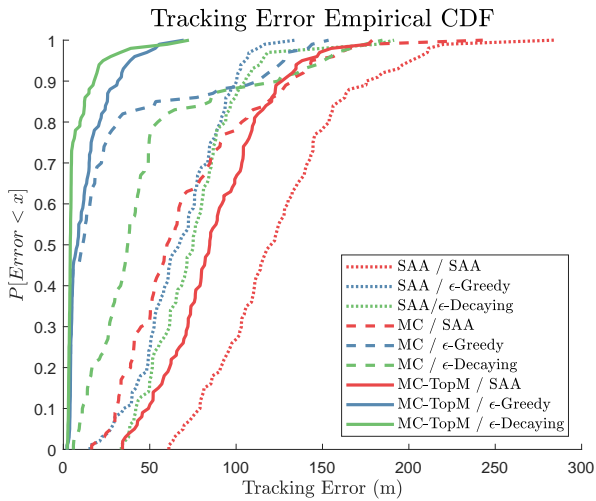}
    \caption{Radar tracking error for these algorithm combinations. As indicated by the regret performance, any networks using SAA for frequency or waveform selection will be outperformed by those using MMAB strategies. The regret performance and specifically the earlier convergence of MCTopM / $\varepsilon$-Decaying allow this combination to obtain lower error on average than other algorithms. }
    \label{fig:NetError}
\end{figure}
\section{Conclusions}
We have seen that radar networks using any amount of cognition for center frequency and waveform selection will outperform pre-allocation techniques, even with equivalent total bandwidth and power distributed through the network.
A major condition for the proper functionality of a radar sensor network is the selection of non-overlapping frequency allocations, which we define as optimal configurations. 
We accomplished this through the use of Multiplayer Multi-Armed Bandit algorithms, which frame this as an iterative decision-making problem. 
We provided a system model which we used to further define a method for the detection of \emph{collisions}, which are instances of two cognitive radar nodes using the same frequency band.

In addition, we described a method of using \emph{single}-player bandit algorithms to accomplish outside interference avoidance. 
We did this by providing an instance of a single-player bandit algorithm to each choice of center frequency, and using the same \texttt{SINR}-based rewards as used for the MMAB frequency selection algorithm. 
The goal of the single-player bandit was to choose from a library of orthogonal waveforms to, over time, avoid any interferers in the environment and obtain the best \texttt{SINR} possible.

All of the algorithms described above operate in a \emph{decentralized} manner, meaning there is no need for information from a central coordinator to make decisions. 
All decisions are made locally to each radar node which reduces communication overhead as well as improves the redundancy of the system to loss of the coordinator.

We demonstrated that the CRN using a combination of MMAB and single-player bandit outperforms SAA as well as the static, pre-allocated case. 
Specifically, MCTopM for center frequency selection paired with $\varepsilon$-Decaying is shown to obtain the lowest regret bound, which we prove corresponds to optimal asymptotic radar tracking performance.

Future work will include a stochastic treatment of the collision detection problem. 
While in this work we use the Nearest Neighbor distribution function of the BPP to estimate the power received from LoS mutual interference, a more rigorous analysis could be accomplished. 
Specifically we assume that the radars point in uniformly random directions in each time step while developing the collision detection algorithm, while in reality each node may be able to form a prior distribution on the pointing of other nodes given a target's estimated position. 
This could be used to inform a collision detection algorithm.

In addition, in this work we assume that all decision-making can be accomplished on the order of a PRI, while in many radar implementations, this and all other processing occurs on the CPI scale to preserve coherence pulse-to-pulse. 
While there exist methods to do coherent processing on pulses which vary inside the CPI, these remain somewhat impractical. 
Future work may examine the impact of reducing all decision-making to the CPI scale, and analyzing the consequences. 
The current algorithms would be hampered by this change, as observations of the environment are assumed to immediately follow the previous action, and immediately precede the next action. 
In addition more realistic radar scenarios such as clutter-limited environments and multiple targets may be considered.

\bibliographystyle{IEEEtran}
\bibliography{bibli}

\end{document}